\newcommand*{\mailto}[1]{\href{mailto:#1}{\nolinkurl{#1}}}
\newcommand{\arxiv}[1]{\href{http://arxiv.org/abs/#1}{arXiv:#1}}
\newcommand{\msc}[1]{\href{http://www.ams.org/msc/msc2010.html?t=&s=#1}{#1}}
\newtheorem{theorem}{Theorem}[section]
\newtheorem{lemma}[theorem]{Lemma}
\newtheorem{corollary}[theorem]{Corollary}
\newtheorem{remark}[theorem]{Remark}
\newcommand{\R}{{\mathbb R}}
\newcommand{\N}{{\mathbb N}}
\newcommand{\C}{{\mathbb C}}
\newcommand{\id}{{\mathbbm{1}}}
\newcommand{\dom}{\mathrm{dom}}
\newcommand{\I}{\mathrm{i}}
\newcommand{\E}{\mathrm{e}}
\DeclareMathOperator{\re}{Re}
\DeclareMathOperator{\im}{Im}
\newcommand{\nn}{\nonumber}
\newcommand{\be}{\begin{equation}}
\newcommand{\ee}{\end{equation}}
\newcommand{\lam}{\lambda}
\newcommand{\inda}{\upalpha}
\newcommand{\indb}{\upbeta}
\numberwithin{equation}{section}
\newcommand{\dlmf}[1]{%
\cite[%
 \def\nextitem{\def\nextitem{, }}%
 \@for \el:=#1\do{\nextitem\expandafter\dlmf@eq@href\el...\end}%
]{dlmf}%
}
\def\dlmf@eq@href#1.#2.#3.#4\end{%
  \href{http://dlmf.nist.gov/#1.#2.E#3}{(#1.#2.#3)}}
\begin{document}

\title[Dispersion Estimates]{Dispersion Estimates for the Discrete Laguerre Operator}

\author[A. Kostenko]{Aleksey Kostenko}
\address{Faculty of Mathematics\\ University of Vienna\\
Oskar-Morgenstern-Platz 1\\ 1090 Wien\\ Austria}
\email{\mailto{duzer80@gmail.com};\mailto{Oleksiy.Kostenko@univie.ac.at}}
\urladdr{\url{http://www.mat.univie.ac.at/~kostenko/}}

\author[G. Teschl]{Gerald Teschl}
\address{Faculty of Mathematics\\ University of Vienna\\
Oskar-Morgenstern-Platz 1\\ 1090 Wien\\ Austria\\ and International 
Erwin Schr\"odinger Institute for Mathematical Physics\\ 
Boltzmanngasse 9\\ 1090 Wien\\ Austria}
\email{\mailto{Gerald.Teschl@univie.ac.at}}
\urladdr{\url{http://www.mat.univie.ac.at/~gerald/}}

\thanks{{\it Research supported by the Austrian Science Fund (FWF) 
under Grant No.\ P26060}}
\thanks{Lett. Math. Phys. {\bf 106}, 545--555 (2016)}

\keywords{Schr\"odinger equation, dispersive estimates, Laguerre polynomials}
\subjclass[2010]{Primary \msc{35Q41}, \msc{47B36}; Secondary \msc{81U30}, \msc{81Q05}}

\begin{abstract}
We derive an explicit expression for the kernel of the evolution group $\exp(-\mathrm{i} t H_0)$ of the discrete Laguerre operator $H_0$
(i.e.\ the Jacobi operator associated with the Laguerre polynomials) in terms of Jacobi polynomials.
Based on this expression we show that the norm of the evolution group acting from $\ell^1$ to $\ell^\infty$ is given by $(1+t^2)^{-1/2}$.
\end{abstract}

\maketitle

\section{Introduction}

We are concerned with the one-dimensional discrete Schr\"odinger equation
\begin{equation} \label{Schr}
  \I \dot \psi(t,n) = H_0 \psi(t,n), \quad  
  (t,n)\in\R\times \N_0,
\end{equation}
associated with the Laguerre operator
\be\label{eq:H0}
H_0 = \begin{pmatrix} 
1 & 1 & 0 & 0 & \cdots \\
1 & 3 & 2 & 0 & \cdots \\
0 & 2 & 5 & 3 & \cdots \\
0 & 0 & 3 & 7 & \cdots \\
\vdots &\vdots&\vdots&\vdots&\ddots
\end{pmatrix},
\ee
in $\ell^2(\N_0)$. Explicitly, $H_0 = (h_{n,m})_{n,m\in\N_0}$ with $h_{n,n} = 2n+1$, $h_{n,n+1} = h_{n+1,n} = n+1$ and $h_{n,m}=0$ whenever $|n-m|>1$.
Note that $h_{n,n} = h_{n-1,n}+h_{n,n+1}$.
It is a special case of a self-adjoint Jacobi operator whose generalized eigenfunctions are precisely the Laguerre polynomials explaining our name.

This operator appeared recently in the study of radial waves in $(2+1)$-dimensional noncommutative scalar field theory \cite{a06,gms}
and has attracted further interest in \cite{cfw03,ks15a,ks15b,ks15c}. More precisely, \eqref{Schr} is the linear part in the nonlinear Schr\"odinger equation (NLS)
\be\label{eq:nls}
\I \dot \psi(t,n) = H_0 \psi(t,n) - |\psi(t,n)|^{2\sigma}\psi(t,n),\quad \sigma\in\N,\quad (t,n)\in \R_+\times \N_0,
\ee
investigated in the recent work of Krueger and Soffer \cite{ks15a,ks15b,ks15c}. Also $H_0$ appeared in the discrete nonlinear Klein--Gordon equation (NLKG) \cite{cfw03,gms}.
In turn, the dynamics of noncommutative solitons in the context of noncommutative field theory (see, e.g., \cite{bbvy, gms, lec} for reviews) can be reduced to the study of discrete NLKG and NLS equations. In contrast to asymptotic metastability for the NLKG solitons conjectured in \cite{cfw03}, it is expected that the NLS solitons are asymptotically stable (see  \cite[\S 8]{ks15c}). In this connection, let us emphasize that dispersive estimates for the linear part \eqref{Schr} as well as for its perturbations are an important ingredient in the standard procedure for the proof of asymptotic stability for nonlinear PDEs (see \cite{sw1,sw2,bs,bp}). In fact, the case of Jacobi operators with asymptotically constant coefficients has already attracted a lot of attention and we refer to \cite{EHT,EKT} and the references therein (see also \cite{KT} for discrete Dirac-type operators).

To formulate our results we recall the weighted spaces $\ell^p_{\sigma}=\ell^p_{\sigma}(\N_0)$,
$\sigma\in\R$, associated with the norm
\begin{equation*}
   \Vert u\Vert_{\ell^p_{\sigma}}= \begin{cases} \left( \sum_{n\in\N_0} (1+n)^{p\sigma} |u(n)|^p\right)^{1/p}, & \quad p\in[1,\infty),\\
   \sup_{n\in\N_0} (1+n)^{\sigma} |u(n)|, & \quad p=\infty. \end{cases}
\end{equation*}
Of course, the case $\sigma=0$ corresponds to the usual $\ell^p_0=\ell^p$ spaces without weight. Then, in \cite[Theorem 2]{ks15b} it was shown that
\begin{equation}\label{decayks}
\Vert \E^{-\I tH_0}\Vert_{\ell^1_\sigma\to \ell^\infty_{-\sigma}}=\mathcal{O}(t^{-1}),\quad t\to\infty.
\end{equation}
for $\sigma\ge 3$. This is in contrast to the case of the discrete Laplacian $\Delta$, where one has (cf.\ e.g.\ \cite{EKT})
\begin{equation}
\Vert \E^{-\I t\Delta}\Vert_{\ell^1\to \ell^\infty}=\mathcal{O}(t^{-1/3}),\quad t\to\infty.
\end{equation}

The purpose of the present note is to improve \eqref{decayks} by showing that the weights are not
necessary, that is, it holds for $\sigma\ge 0$. Even more, we are able to compute this norm explicitly:

\begin{theorem}\label{thm:decay}
The following equality
\be\label{eq:decay}
\|\E^{-\I tH_0}\|_{\ell^1\to \ell^\infty} = \frac{1}{\sqrt{1+t^2}},\quad t\in\R,
\ee
holds.
\end{theorem}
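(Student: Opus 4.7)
Since $H_0$ is self-adjoint, the $\ell^1\to\ell^\infty$ norm of the unitary group equals the supremum of the absolute values of its matrix entries,
\[
  \norm{\E^{-\I tH_0}}_{\ell^1\to\ell^\infty}
  = \sup_{n,m\in\N_0}\abss{K(t,n,m)},\qquad
  K(t,n,m) := \inner{\delta_n}{\E^{-\I tH_0}\delta_m}.
\]
The Laguerre polynomials $L_n$ diagonalize $H_0$ with spectral measure $\E^{-x}dx$ on $[0,\infty)$, so the spectral theorem gives
\[
  K(t,n,m) = \int_0^\infty \E^{-\I tx}\,L_n(x)L_m(x)\,\E^{-x}\,dx,
\]
and the problem splits into (i) computing this integral in closed form and (ii) bounding its modulus.

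For (i) I would form the bivariate generating function in $(\alpha,\beta)$. Inserting the classical identity $\sum_n L_n(x)z^n = (1-z)^{-1}\E^{-xz/(1-z)}$ into each factor and integrating the resulting exponential in $x$ (the exponent has positive real part for real $t$) yields
\[
  \sum_{n,m\geq 0}\alpha^n\beta^m\,K(t,n,m)
  = \frac{1}{(1+\I t) - \I t(\alpha+\beta) - (1-\I t)\alpha\beta}.
\]
Two successive geometric expansions (Taylor in $\alpha$, then in $\beta$) produce a double binomial sum whose inner piece, after reindexing, is precisely the ${}_2F_1$ representation of a Jacobi polynomial. Exploiting the symmetry $K(t,n,m)=K(t,m,n)$, this yields the closed form
\[
  K(t,n,m) = \frac{(\I t)^{\abss{n-m}}\,(1+t^2)^{\min(n,m)}}{(1+\I t)^{n+m+1}}\;P_{\min(n,m)}^{(\abss{n-m},\,0)}\!\left(\frac{1-t^2}{1+t^2}\right).
\]
Recognizing the binomial sum as the Jacobi ${}_2F_1$ is the heaviest bookkeeping step.

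For (ii), I substitute $\phi:=\arctan t\in(-\pi/2,\pi/2)$, so that $\abss{1+\I t}=1/\cos\phi$, $(1-t^2)/(1+t^2)=\cos 2\phi$, and $(1+t^2)^{-1/2}=\cos\phi$. Writing $\alpha:=\abss{n-m}$, $k:=\min(n,m)$ and using $n+m+1=2k+\alpha+1$, the modulus of the above formula collapses to
\[
  \abss{K(t,n,m)} = \cos\phi\;\sin^{\alpha}\abss{\phi}\;\bigl\lvert P_k^{(\alpha,0)}(\cos 2\phi)\bigr\rvert,
\]
so \eqref{eq:decay} reduces to the uniform inequality
\[
  \sin^{\alpha}\abss{\phi}\;\bigl\lvert P_k^{(\alpha,0)}(\cos 2\phi)\bigr\rvert \leq 1,
  \qquad k,\alpha\in\N_0,\ \phi\in(-\pi/2,\pi/2).
\]
The cleanest verification is to identify the left-hand side, by comparing its explicit expansion in $\sin\phi,\cos\phi$ with the Wigner $d$-function of spin $j=k+\alpha/2$, as $\bigl\lvert d^{\,j}_{\alpha/2,\,-\alpha/2}(2\phi)\bigr\rvert$; the latter is an entry of a unitary irreducible representation of $SU(2)$ and is therefore automatically $\leq 1$. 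I view this identification (or an equivalent direct Szeg\H o-type bound for Jacobi polynomials) as the main conceptual hurdle.

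Finally, the bound is saturated by the corner entry: since $L_0\equiv 1$, one has $K(t,0,0)=\int_0^\infty \E^{-(1+\I t)x}dx=(1+\I t)^{-1}$, hence $\abss{K(t,0,0)}=(1+t^2)^{-1/2}$. Together with the uniform upper bound this yields \eqref{eq:decay}.
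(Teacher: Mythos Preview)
Your proposal is correct and follows the same overall architecture as the paper: write the kernel as the spectral integral $\int_0^\infty \E^{-\I t\lambda}L_n(\lambda)L_m(\lambda)\E^{-\lambda}d\lambda$, evaluate it in closed form in terms of a Jacobi polynomial $P_{\min(n,m)}^{(|n-m|,0)}$, reduce the $\ell^1\to\ell^\infty$ bound to the pointwise inequality $\sin^{\alpha}\!|\phi|\,\bigl|P_k^{(\alpha,0)}(\cos 2\phi)\bigr|\le 1$, and observe saturation at the $(0,0)$ entry. The two places where you deviate are purely technical. For the closed form, the paper quotes a tabulated Laplace transform of $L_nL_m$ from Erd\'elyi and then invokes the ${}_2F_1$ representation of $P_k^{(\alpha,\beta)}$, whereas you compute the bivariate generating function directly; both routes land on the same formula. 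For the crucial inequality, the paper cites the analytic bound of Haagerup--Schlichtkrull (their formula~(20), which in the case $\beta=0$ gives exactly $|g_n^{(m-n,0)}|\le 1$), while you identify the left-hand side with a Wigner $d$-matrix element and invoke unitarity of the $\mathrm{SU}(2)$ representation. The paper in fact remarks in its introduction that the estimate ``follows from the unitarity of the so-called Wigner $d$-matrix'' and that the Haagerup--Schlichtkrull paper supplies an analytic proof of the same bound, so your argument is precisely the alternative the authors acknowledge but do not write out. One small caveat: the orthonormal polynomials for $H_0$ are $(-1)^nL_n$, not $L_n$, so your integral formula for $K(t,n,m)$ is off by a factor $(-1)^{n+m}$; this is harmless for the modulus and for the final result.
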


This result in turn is based on the following explicit expression for the kernel of the evolution group $\E^{-\I tH_0}$ given in terms of Jacobi polynomials (see \cite{aar,sz} for the definition and basic properties):

\begin{theorem}\label{thm:explicit}
Let $n$, $m\in\N_0$ be such that $n\le m$. Then
\begin{align}
\E^{-\I tH_0}(n,m)
	& = \frac {1}{1+\I t}\left(\frac{\I+t}{\I - t}\right)^n \left(\frac{t}{\I-t}\right)^{m-n} P_n^{(m-n,0)}\left( \frac{1-t^2}{1+t^2}\right),\label{eq:explicit}
\end{align}
where
\be\label{eq:Jacpol}
P_n^{(m-n,0)}(z)=\sum_{k=0}^n \binom{m}{n-k} \binom{n}{k} \left(\frac{z-1}{2}\right)^k\left(\frac{z+1}{2}\right)^{n-k}
\ee
 is the Jacobi polynomial.
\end{theorem}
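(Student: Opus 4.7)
My strategy is to convert the discrete Schr\"odinger equation \eqref{Schr} into a first-order PDE by passing to the generating function $F(t,x):=\sum_{n\in\N_0}\psi(t,n)\,x^n$. Using the three-term action $(H_0\psi)(n)=n\psi(n-1)+(2n+1)\psi(n)+(n+1)\psi(n+1)$, a short bookkeeping yields
\begin{equation*}
  \I\,F_t=(1+x)\,F+(1+x)^2 F_x,
\end{equation*}
which I would solve by the method of characteristics. On the curves $dx/dt=\I(1+x)^2$ the quantity $C:=(1+x)^{-1}+\I t$ is conserved, and the PDE reduces along them to $d(\log F)/dt=-\I(1+x)$. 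Taking the initial data $\psi(0,\cdot)=\delta_m$, i.e.\ $F(0,x)=x^m$, tracing back to $t=0$ gives the closed form
\begin{equation*}
  F(t,x)=\frac{1}{1+\I t(1+x)}\left(\frac{x-\I t(1+x)}{1+\I t(1+x)}\right)^{\!m}.
\end{equation*}
Since $F(t,x)=\sum_n \E^{-\I tH_0}(n,m)\,x^n$ by construction, the kernel is the Taylor coefficient $[x^n]F(t,x)$.

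\textbf{Comparison with the Jacobi expansion.} Writing the numerator as $(1-\I t)x-\I t$ and the denominator base as $(1+\I t)+\I t\,x$, the binomial theorem applied to both factors and a reindexing $k=n-j$ give, for $n\le m$,
\begin{equation*}
  \E^{-\I tH_0}(n,m)=\frac{(-\I t)^{m-n}}{(1+\I t)^{m+n+1}}\sum_{k=0}^n\binom{m}{n-k}\binom{m+k}{m}(1+t^2)^{n-k}(-t^2)^k.
\end{equation*}
On the other hand, using $\tfrac{\I+t}{\I-t}=\tfrac{1-\I t}{1+\I t}$ and $\tfrac{t}{\I-t}=\tfrac{-\I t}{1+\I t}$, together with the substitution $\tfrac{z-1}{2}=-\tfrac{t^2}{1+t^2}$, $\tfrac{z+1}{2}=\tfrac{1}{1+t^2}$ in \eqref{eq:Jacpol} at $z=(1-t^2)/(1+t^2)$, the right-hand side of \eqref{eq:explicit} collapses to the parallel expression
\begin{equation*}
  \frac{(-\I t)^{m-n}}{(1+\I t)^{m+n+1}}\sum_{k=0}^n\binom{m}{n-k}\binom{n}{k}(-t^2)^k.
\end{equation*}

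\textbf{Main obstacle.} The theorem thus reduces to the polynomial identity in $y:=-t^2$,
\begin{equation*}
  \sum_{k=0}^n\binom{m}{n-k}\binom{m+k}{m}(1-y)^{n-k}y^k=\sum_{k=0}^n\binom{m}{n-k}\binom{n}{k}y^k,
\end{equation*}
which is the only nontrivial combinatorial step. I would verify it by showing that both sides coincide with $[z^n]\,(1+(1-y)z)^m/(1-yz)^{m+1}$: after substituting $j=n-k$ in each $\sum_{n\ge 0}(\,\cdot\,)\,z^n$, the resulting double sum factorises as a Cauchy product handled by the elementary identities $\sum_j\binom{m}{j}w^j=(1+w)^m$, $\sum_k\binom{m+k}{m}w^k=(1-w)^{-m-1}$, and (for the right-hand side) $\sum_k\binom{j+k}{k}w^k=(1-w)^{-j-1}$. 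Once this Vandermonde-type bridge is in place, the two expressions for $\E^{-\I tH_0}(n,m)$ coincide and \eqref{eq:explicit} follows.
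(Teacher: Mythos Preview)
Your argument is correct and takes a genuinely different route from the paper. The paper first establishes (via Stone's formula) the spectral representation
\[
\E^{-\I tH_0}(n,m)=(-1)^{n+m}\int_0^\infty \E^{-(1+\I t)\lambda}L_n(\lambda)L_m(\lambda)\,d\lambda,
\]
then looks up the Laplace transform of $L_nL_m$ in tables (Erd\'elyi) to obtain a closed form in terms of ${}_2F_1(-n,-m;-n-m;\cdot)$, and finally invokes the standard hypergeometric representation of Jacobi polynomials. Your approach bypasses spectral theory entirely: the three-term recurrence turns the generating function $F(t,x)=\sum_n\psi(t,n)x^n$ into the linear first-order PDE $\I F_t=(1+x)F+(1+x)^2F_x$, which you integrate explicitly by characteristics; the kernel then drops out as a Taylor coefficient, and matching with the Jacobi expression reduces to the binomial identity you state (and your generating-function verification of that identity via $\sum_k\binom{j+k}{k}w^k=(1-w)^{-j-1}$ is clean and correct). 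The paper's proof is shorter if one is willing to quote transform tables and special-function identities; yours is more self-contained and elementary, needing only calculus and the negative binomial series, and it makes transparent \emph{why} a rational substitution $z=(1-t^2)/(1+t^2)$ appears. One small point you should make explicit for full rigor: the passage from the discrete evolution to the PDE (interchanging $\partial_t$ with the power series and applying the shift identities termwise) is justified for $|x|<1$ because unitarity gives $\sum_n|\E^{-\I tH_0}(n,m)|^2=1$, hence the series and its formal manipulations converge absolutely; alternatively, once your closed form for $F(t,x)$ is in hand, its Taylor coefficients can be checked directly to satisfy \eqref{Schr} with the correct initial data.
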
 

As it was already mentioned, the understanding of the dynamics of \eqref{Schr}, which is the linear part of \eqref{eq:nls}, is of crucial importance in the study of \eqref{eq:nls} and the NLKG equations. Furthermore, the understanding of the free evolution \eqref{Schr} is a necessary prerequisite for a successful development of scattering theory.

The proof of Theorem \ref{thm:decay} and Theorem \ref{thm:explicit} is given in the next section and it is based on the fact that every element of the kernel of $\E^{-\I tH_0}$ is a Laplace transform of a product of two Laguerre polynomials (Lemma \ref{lem:exponent}). Now notice that the required decay estimate will follow once we have a uniform estimate for the Jacobi polynomials $P_n^{(m-n,0)}(\cdot)$ on the segment of orthogonality $[-1,1]$. Unfortunately, the standard estimate (see, e.g., \cite[Theorem 7.32.1]{sz}) only gives 
\be\label{eq:sz_est}
\max_{x\in [-1,1]}\big|P_n^{(m-n,0)}(x)\big| = \binom{m}{n},\quad n\le m,\ n,m\in\N,
\ee
which is clearly insufficient for our purposes. It is somewhat surprising that the required estimate follows from the unitarity of the so-called Wigner $d$-matrix (Jacobi polynomials appear as matrix elements for the irreducible representations of ${\rm SU}(2)$, see \cite{aar,vil}). Even more, to the best of our knowledge, the analytic proof of this estimate was obtained only recently by Haagerup and Schlichtkrull in \cite{haa}. 

Let us also mention one more dispersive estimate which follows from the Haagerup--Schlichtkrull inequality for Jacobi polynomials  \cite[Theorem 1.1]{haa} (see \eqref{eq:haagerup} below).

\begin{theorem}\label{thm:decay_nm}
There is a constant $C\le  2\sqrt[4]{42}$ such that the following inequality 
\begin{align}\label{eq:decay_nm}
\big|\E^{-\I tH_0}(n,m)\big| \le \frac{C}{t^{1/2} (n+m+1)^{1/4}},\quad n,m\in\N_0,
\end{align}
holds for all $t>0$.
\end{theorem}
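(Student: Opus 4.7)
The plan is to combine the explicit kernel formula of Theorem~\ref{thm:explicit} with the Haagerup--Schlichtkrull estimate~\eqref{eq:haagerup}. Since $H_0$ is real symmetric, the unitary group $\E^{-\I tH_0}$ has a symmetric kernel, so without loss of generality I may assume $n\le m$. For real $t$ one has $|(\I+t)/(\I-t)|=1$, $|1+\I t|=\sqrt{1+t^2}$, and $|t/(\I-t)|=|t|/\sqrt{1+t^2}$, so taking absolute values in Theorem~\ref{thm:explicit} gives
\begin{equation*}
\big|\E^{-\I tH_0}(n,m)\big|
=\frac{1}{\sqrt{1+t^2}}\left(\frac{|t|}{\sqrt{1+t^2}}\right)^{m-n}\big|P_n^{(m-n,0)}(x)\big|,
\qquad x:=\frac{1-t^2}{1+t^2}\in[-1,1].
\end{equation*}

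With this substitution $\frac{1+x}{2}=\frac{1}{1+t^2}$ and $\frac{1-x}{2}=\frac{t^2}{1+t^2}$, so the prefactor equals $\bigl(\frac{1+x}{2}\bigr)^{1/2}\bigl(\frac{1-x}{2}\bigr)^{(m-n)/2}$. I would split this as
\begin{equation*}
\left[\left(\tfrac{1+x}{2}\right)^{1/4}\left(\tfrac{1-x}{2}\right)^{-1/4}\right]\cdot\left[\left(\tfrac{1+x}{2}\right)^{1/4}\left(\tfrac{1-x}{2}\right)^{(m-n)/2+1/4}\right],
\end{equation*}
where the first bracket evaluates to $|t|^{-1/2}$ by direct calculation. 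The second bracket multiplied by $|P_n^{(m-n,0)}(x)|$ is precisely the weighted Jacobi quantity controlled by the Haagerup--Schlichtkrull inequality \eqref{eq:haagerup} with parameters $\alpha=m-n$ and $\beta=0$, which supplies a uniform bound of the form $C'\cdot(2n+\alpha+\beta+1)^{-1/4}=C'(n+m+1)^{-1/4}$ for $x\in[-1,1]$.

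Concatenating the two factors yields the desired inequality for $n\le m$, and the symmetry $\E^{-\I t H_0}(n,m)=\E^{-\I tH_0}(m,n)$ extends it to all $n,m\in\N_0$ without additional loss (because the roles of the Jacobi parameters $\alpha$ and $\beta$ swap but $n+m+1$ is symmetric). The precise constant $C\le 2\sqrt[4]{42}$ should then drop out by bookkeeping the constant from \cite[Theorem~1.1]{haa}.

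The only non-elementary step, and indeed the main obstacle, is the Haagerup--Schlichtkrull inequality itself: as already noted in the introduction, the standard Szeg\H{o} bound $\max_{[-1,1]}|P_n^{(m-n,0)}|=\binom{m}{n}$ is far too weak, and only the optimal weighted estimate~\eqref{eq:haagerup} produces the correct $(n+m+1)^{-1/4}$ decay. Once that inequality is taken as given, the rest of the proof is pure algebraic rearrangement of the explicit kernel formula together with the unimodularity identity $|(\I+t)/(\I-t)|=1$.
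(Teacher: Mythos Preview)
Your proposal is correct and follows essentially the same route as the paper: both arguments feed the explicit kernel formula of Theorem~\ref{thm:explicit} into the Haagerup--Schlichtkrull inequality~\eqref{eq:haagerup} with $\upalpha=m-n$, $\upbeta=0$, using the substitution $x=(1-t^2)/(1+t^2)$ to convert the $t$-prefactors into the Jacobi weight. The only cosmetic difference is that the paper packages the weight as $(1-x^2)^{1/4}g_n^{(m-n,0)}(x)$ via the identity~\eqref{eq:fin}, whereas you split it by hand; your ``second bracket'' equals $2^{-1/2}(1-x^2)^{1/4}|g_n^{(m-n,0)}(x)|$ (the normalizing gamma factor in $g_n$ is $1$ here), and combining this $2^{-1/2}$ with the constant $2\sqrt[4]{168}$ from \cite{haa} indeed yields $C=2\sqrt[4]{42}$.
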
 

Note that the estimate \eqref{eq:decay_nm} does not provide an optimal decay rate in $t$, however, it gives an additional decay of the coefficients of the kernel in $n$ and $m$.  
Let us mention that the Haagerup--Schlichtkrull inequality \eqref{eq:haagerup} was derived in order to obtain uniform bounds on a complete set of matrix coefficients for the irreducible representations of ${\rm SU}(2)$ with a decay rate $d^{-1/4}$ in the dimension $d$ of the representation. Furthermore, the Bernstein (see \eqref{eq:bernst}) and the Haagerup--Schlichtkrull estimates were used in \cite{ldl} and \cite{hdl}, respectively, for establishing the absence of the approximation property of Haagerup and Kraus \cite{hk} for ${\rm SL}(3,\R)$ and ${\rm Sp}(2,\R)$. 

On the other hand, in the follow-up paper \cite{kt16}, we investigate the decay estimate for generalized Laguerre operators $H_\alpha$ (tri-diagonal matrices associated with generalized Laguerre polynomials $L_n^{(\alpha)}$, see \cite{sz}), where the coefficient $\alpha$ can be seen as a measure of the delocalization of the field configuration and it is related to the planar angular momentum \cite{a13} ($\alpha=0$ corresponds to radial waves in $(2+1)$-dimensional noncommutative scalar field theory).  
It turned out that the optimal dispersive decay estimate leads to new Bernstein-type inequalities for Jacobi polynomials. 
All these connections are mathematically very appealing and we hope that the present note will stipulate further research in this direction.

To end this section, let us briefly outline the content of the paper. Before proving the main result, we collect the basic spectral properties of the operator $H_0$ in Theorem \ref{thm:spH_0} and also present its proof based on spectral theory of Jacobi operators. Next, in Lemma \ref{lem:exponent} we represent the kernel of the evolution group $\E^{-\I tH_0}$ by means of the Laguerre polynomials and then prove our main results Theorem \ref{thm:explicit}, Theorem \ref{thm:decay}, and Theorem \ref{thm:decay_nm}. Finally, in Lemma \ref{lem:conv} we present another representation of the kernel of $\E^{-\I tH_0}$, which might be of independent interest. In particular, it allows to obtain a simple proof of \eqref{decayks} for $\sigma\ge 1/2$.

\section{Proof of the main results}

We start with a precise definition of the operator $H_0$. Let $D:\ell^2(\N_0) \to \ell^2(\N_0)$ be the multiplication operator given by
\be\label{eq:D}
(Du)_n = (n+1)u_n, \quad u \in \dom(D)=\ell_1^2(\N_0).
\ee
For a sequence $u=\{u_n\}_{n\ge 0}$ we define the difference expression $\tau: u\mapsto \tau u$ by setting
\be\label{eq:tau}
(\tau u)_n : = \begin{cases} u_0 + u_1, & n=0,\\ nu_{n-1} + (2n+1)u_n + (n+1) u_{n+1}, & n\ge 1. \end{cases}
\ee
Then the operator $H_0$ associated with the Jacobi matrix \eqref{eq:H0} is defined by
\begin{align}\begin{split}
H_0: \begin{array}[t]{lcl} \mathcal{D}_{\max} &\to& \ell^2(\N_0) \\ u &\mapsto& \tau u, \end{array}
\end{split}\end{align}
where $ \mathcal{D}_{\max} = \{u\in \ell^2(\N_0):\, \tau u\in \ell^2(\N_0)\}$.
Note that $\ell_1^2(\N_0)\subset \mathcal{D}_{\max}$, however, simple examples (take $u=\{{(-1)^n}/{(n+1)}\}_{n\ge 0}$) show that the inclusion is strict.

The spectral properties of $H_0$ were derived in \cite{cfw03, ks15b}, however, without using the well-developed spectral
theory for Jacobi operators \cite{tjac}. We collected them in the following theorem and give a short proof using this connection.

\begin{theorem}\label{thm:spH_0}
\begin{itemize}
\item[(i)] The operator $H_0$ is a positive self-adjoint operator.
\item[(ii)] The Weyl function and the corresponding spectral measure are given by
\be
m_0(z) = \E^{-z}E_1(-z) = \int_0^{+\infty} \frac{\E^{-\lambda}}{\lambda-z}\, d\lambda ,\quad d\rho(\lambda) = \id_{\R_+}(\lambda)\E^{-\lambda}d\lambda,
\ee
where $E_1$ denotes the principal value of the exponential integral \dlmf{6.2.1}.
\item[(iii)] The spectrum of $H_0$ is purely absolutely continuous and coincides with $\R_+=[0,\infty)$.
\end{itemize}
\end{theorem}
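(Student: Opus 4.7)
The plan is to reduce the theorem to the standard spectral theory for Jacobi operators (see \cite{tjac}). In that framework, $H_0$ is the Jacobi operator with off-diagonal weights $a_n=n+1$ and diagonal $b_n=2n+1$, and the chosen domain $\mathcal{D}_{\max}$ makes $H_0$ coincide with the adjoint of the restriction $H_{\min}$ of $\tau$ to finitely supported sequences. I would first prove self-adjointness in (i) by invoking the Carleman criterion: since $\sum_{n\ge 0}1/a_n=\sum_{n\ge 0}1/(n+1)=\infty$, the Jacobi matrix lies in the limit point case at $+\infty$, so $H_{\min}$ is essentially self-adjoint and hence $H_0=H_{\min}^{*}=\overline{H_{\min}}$ is self-adjoint.

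Next, I would identify the orthonormal polynomials associated with $H_0$. Comparing the recurrence $n\,p_{n-1}(\lambda)+(2n+1-\lambda)p_n(\lambda)+(n+1)p_{n+1}(\lambda)=0$ with initial data $p_{-1}=0$, $p_0\equiv 1$, to the classical Laguerre recurrence $(n+1)L_{n+1}(\lambda)=(2n+1-\lambda)L_n(\lambda)-n L_{n-1}(\lambda)$ yields $p_n(\lambda)=(-1)^{n}L_n(\lambda)$. Since the $L_n$ are orthonormal with respect to $\E^{-\lambda}d\lambda$ on $[0,\infty)$, and since $a_n>0$ for all $n$ implies that $\delta_0$ is a cyclic vector for $H_0$, the full spectral measure of $H_0$ at $\delta_0$ is $d\rho(\lambda)=\id_{\R_+}(\lambda)\E^{-\lambda}d\lambda$. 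The Weyl function $m_0$ is the Borel--Stieltjes transform of $\rho$, and the identity $m_0(z)=\E^{-z}E_1(-z)$ then follows from the substitution $t=\lambda-z$ in $\int_0^{+\infty}\E^{-\lambda}/(\lambda-z)\,d\lambda$, which converts it into the defining integral of the principal branch of $E_1$ evaluated at $-z$.

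Assertion (iii) is an immediate consequence of the explicit form of $\rho$ from (ii): $\rho$ is purely absolutely continuous with support exactly $\R_+$, so the same is true of the spectrum of $H_0$ (cyclicity of $\delta_0$ ensures no further spectral mass is missed). The positivity in (i) is then automatic from $\sigma(H_0)\subset\R_+$. The only genuinely non-trivial ingredient is the essential self-adjointness step, which relies on the limit point classification; the remainder is a direct translation of the orthogonality properties of the Laguerre polynomials into spectral-theoretic statements via the standard Jacobi-operator dictionary.
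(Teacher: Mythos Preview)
Your proposal is correct and follows essentially the same route as the paper: Carleman's criterion for self-adjointness, identification of the orthonormal polynomials as $(-1)^n L_n$, the Laguerre orthogonality relation to obtain $d\rho$, and then (iii) from the explicit $\rho$. The one difference is the positivity argument: the paper proves $H_0\ge 0$ (and even $\ker H_0=\{0\}$) directly via the factorization $H_0=(I+U)D(I+U^{*})$ before touching the spectral measure, whereas you infer $H_0\ge 0$ a posteriori from $\sigma(H_0)\subset\R_+$; both are valid, and your ordering avoids the factorization at the cost of not yielding strict positivity.
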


\begin{proof}
(i) Self-adjointness clearly follows from the Carleman test (see, e.g., \cite{akh}, \cite[(2.165)]{tjac}) since $\sum_{n\ge 0} (n+1)^{-1} = \infty$.
Nonnegativity follows from the following representation of the matrix \eqref{eq:H0}
\be\label{eq:H0factor}
H_0 = (I+U) D (I+U^*),
\ee 
where $U:(u_0,u_1,u_2,\dots)\mapsto (0,u_0,u_1,u_2,\dots)$ is the forward shift on $\ell^2(\N_0)$ and $U^*:(u_0,u_1,u_2,\dots)\mapsto (u_1,u_2,u_3,\dots)$ is its adjoint, the backward shift operator. Moreover, using this factorization, it is not difficult to check that the kernel of $H_0$ is trivial, $\ker(H_0) = \{0\}$.

(ii) Notice that the polynomials of the first kind for $H_0$ are given by
\be
P_n(z) = (-1)^{n} L_n(z),\quad n\in \N_0,
\ee
where
\be\label{eq:laguerpol}
L_n(z) = \frac{1}{n!}\left(\frac{d}{dz} -1\right)^n z^n = \sum_{k=0}^n  \binom{n}{k}\frac{(-z)^k}{k!},\quad n\in\N_0,
\ee
are the Laguerre polynomials. Indeed (see, e.g., \cite[Chapter~V]{sz}), they satisfy the following recursion relations
\be\label{eq:lagrecurs}
\begin{split}
L_0(z) - L_1(z) & = z L_0(z),\\
-nL_{n-1}(z) + (2n+1)L_n(z) - (n+1) L_{n+1}(z) & = z L_n(z),\quad n\ge 1,
\end{split}
\ee
and the orthogonality relations
\be\label{eq:lagorth}
\int_{[0,\infty)} L_n(\lambda) L_k(\lambda)  \E^{-\lambda}\, d\lambda = \delta_{nk},\quad n,k\in\N_0.
\ee
Therefore, \eqref{eq:lagorth} and (i) imply that $d\rho(\lambda) = \id_{\R_+}(\lambda)\E^{-\lambda}d\lambda$ is the spectral measure of $H_0$, that is, $H_0$ is unitarily equivalent to a multiplication operator in $L^2(\R_+,d\rho)$ (cf.\ e.g.\ \cite[Theorem 2.12]{tjac}). 
It remains to note that the corresponding Weyl function is the Stieltjes transform of the measure $d\rho$ (cf.\ e.g.\ \cite[Chapter~2]{tjac}). 

 (iii) The claim immediately follows from (ii). Indeed, as it was already mentioned, $H_0$ is unitarily equivalent to a multiplication operator $\tilde{H}$ acting in $L^2(\R_+,d\rho)$,
\begin{align*}\begin{split}
\tilde{H}: \begin{array}[t]{lcl} \dom(\tilde{H}) &\to& L^2(\R_+,d\rho) \\ f(\lam) &\mapsto& \lam f(\lam), \end{array}
\end{split}\end{align*}
where $\dom(\tilde{H}) = L^2(\R_+;\lam^2d\rho)$
Hence the spectra as well as the spectral types of both operators coincide \cite[Eq. (2.106)]{tjac}). It remains to note that the spectrum of $\tilde{H}$ is purely absolutely continuous and coincides with $\R_+=[0,\infty)$. 
\end{proof}

\begin{remark}
Note that
\[
m_0(-x) \uparrow +\infty\quad \text{as}\quad x\downarrow 0.
\]
\end{remark}

Next, let us define the polynomials of the second kind (see \cite{akh,tjac})
\[
Q_n(z) = (-1)^{n}\int_0^\infty \frac{L_n(z)-L_n(\lam)}{z-\lam} \E^{-\lam}d\lam,
\]
which satisfy $(\tau u)_n = z u_n$ for all $n\ge 1$ with $Q_{0}\equiv 0$, $Q_1\equiv 1$. It follows from \eqref{eq:lagrecurs} that
\[
L_{n}(z) Q_{n+1}(z) + L_{n+1}(z)Q_n(z) = \frac{(-1)^{n}}{n+1},\quad n\in\N_0.
\]
Moreover, for all $z\in \C\setminus\R_+$ the linear combination
\be\label{eq:weylsol}
\Psi_n(z):= Q_n(z) + m_0(z) P_n(z),\quad n\in\N_0, 
\ee
also known as Weyl solution in the Jacobi operator context, satisfies $\{\Psi_n(z)\}_{n\ge 0} \in \ell^2(\N_0)$. Therefore, the resolvent of $H_0$ is given by
\be\label{eq:H0resolv}
G(z;n,m)= \langle (H_0 - z)^{-1}\delta_n ,\delta_m \rangle  = \begin{cases} (-1)^{n}L_n(z) \Psi_m(z), & n\le m,\\ (-1)^m L_m(z) \Psi_n(z), & n\ge m. \end{cases}
\ee
Notice that $G(z;0,0)= \langle (H_0 - z)^{-1}\delta_0 ,\delta_0 \rangle  = m_0(z)$.

The next result provides an integral representation of the operator $\E^{-\I tH_0}$ in terms of the Laguerre polynomials.

\begin{lemma}\label{lem:exponent}
The kernel of the operator $\E^{-\I tH_0}$ is given by
\begin{align}\label{PP}
\E^{-\I tH_0}(n,m)
	=(-1)^{n+m} \int_{0}^{\infty}\E^{-\I t \lambda} L_n(\lambda) L_m(\lambda) \E^{-\lambda} d\lambda,\quad n,m\in\N_0. 	 
\end{align}
\end{lemma}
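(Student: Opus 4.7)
The plan is to invoke the spectral theorem via the unitary equivalence already established in Theorem \ref{thm:spH_0}. That theorem tells us $H_0$ is unitarily equivalent to the multiplication operator $\tilde H$ acting in $L^2(\R_+, d\rho)$ with $d\rho(\lambda) = \id_{\R_+}(\lambda)\E^{-\lambda}\,d\lambda$. Moreover, from the proof of Theorem \ref{thm:spH_0}(ii), the diagonalizing unitary $U:\ell^2(\N_0)\to L^2(\R_+,d\rho)$ sends the standard basis vectors to the orthonormal polynomials of the first kind: $U\delta_n = P_n = (-1)^n L_n$, where $\{L_n\}$ are the Laguerre polynomials satisfying the orthogonality relation \eqref{eq:lagorth}.

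Given this, I would write out the matrix element of $\E^{-\I tH_0}$ in the standard basis and transfer it through $U$. Since the spectral theorem turns Borel functions of $H_0$ into multiplication operators, we have
\begin{align*}
\E^{-\I tH_0}(n,m)
 = \langle \E^{-\I tH_0}\delta_m,\delta_n\rangle
 = \langle \E^{-\I t\lambda} U\delta_m, U\delta_n\rangle_{L^2(\R_+,d\rho)},
\end{align*}
and substituting $U\delta_k = (-1)^k L_k$ (which is real-valued on $\R_+$, so no complex conjugate issue arises) together with the explicit form of $d\rho$ yields exactly
\[
\E^{-\I tH_0}(n,m) = (-1)^{n+m}\int_0^\infty \E^{-\I t\lambda} L_n(\lambda) L_m(\lambda)\,\E^{-\lambda}\,d\lambda.
\]

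There is no real obstacle here: the lemma is essentially a reformulation of the spectral theorem for the Jacobi operator $H_0$ in the concrete coordinates supplied by Theorem \ref{thm:spH_0}. The only points that deserve a line of justification are (a) that the unitary $U$ genuinely identifies $\delta_n$ with the normalized orthogonal polynomial $P_n$ of the first kind—this is standard for bounded below self-adjoint Jacobi operators in the limit-point case, which applies by the Carleman argument used in Theorem \ref{thm:spH_0}(i); and (b) that the integral in \eqref{PP} converges (absolutely for $t\in\R$ the integrand is bounded by $|L_n(\lambda) L_m(\lambda)|\E^{-\lambda}\in L^1(\R_+)$ since the Laguerre polynomials are polynomially bounded while the weight decays exponentially). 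Both points are immediate, so the proof reduces to a one-line application of the spectral calculus.
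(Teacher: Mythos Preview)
Your proof is correct but takes a different route from the paper. The paper derives \eqref{PP} via Stone's formula: it writes
\[
\E^{-\I tH_0}(n,m)=\frac{1}{2\pi\I}\int_0^\infty \E^{-\I t\lambda}\big[G(\lambda+\I 0;n,m)-G(\lambda-\I 0;n,m)\big]\,d\lambda,
\]
substitutes the explicit Green's function \eqref{eq:H0resolv} in terms of $L_n$, $Q_m$ and the Weyl function $m_0$, and then uses $\im m_0(\lambda+\I 0)=\pi\E^{-\lambda}$ to collapse the jump across the cut. You instead invoke the functional calculus directly through the unitary $U\delta_n=(-1)^n L_n$ already supplied by Theorem~\ref{thm:spH_0}, turning the matrix element into an $L^2(d\rho)$ inner product in one line. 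Your argument is shorter and avoids reintroducing the resolvent and Weyl solution; the paper's approach, on the other hand, makes the appearance of the weight $\E^{-\lambda}$ explicit as the boundary jump of $m_0$, and is self-contained once \eqref{eq:H0resolv} has been recorded. Both are standard expressions of the same spectral-theoretic fact, and neither hides any real difficulty.
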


\begin{proof}
From Stone's formula \cite{tschroe} we know
\[
\E^{-\I tH_0}(n,m)=\frac 1{2\pi \I}\int_{0}^{\infty}\E^{-\I t\lambda}
	  \left[G(\lambda+\I0;n,m)-G(\lambda-\I 0;n,m)\right] d\lambda.
\]
Since
\[
G(\lambda\pm \I0;n,m) = (-1)^n L_n(\lambda) \big(Q_m(\lambda) \pm  m_0(\lambda+\I 0) (-1)^m L_m(\lambda)\big),\quad  \lambda>0,
\]
if $n\le m$, it remains to note that $L_n(\lambda)^* = L_n(\lambda)$, $Q_m(\lambda)^* = Q_m(\lambda)$, and, moreover, $ \im m_0(\lambda+\I 0) = \pi\E^{-\lambda}$ for all $\lambda>0$.
\end{proof}

It follows from \eqref{PP} that every element of the kernel of the operator $\E^{-\I tH_0}$ is the Laplace transform of a product of the corresponding Laguerre polynomials and hence one can compute them explicitly:

\begin{proof}[Proof of Theorem \ref{thm:explicit}]
Recall that (see \cite[(4.11.35)]{erd})
\[
\int_0^\infty \E^{-p\lambda} L_n(\lambda)L_m(\lambda) d\lambda = \binom{n+m}{n} \frac{(p-1)^{n+m}}{p^{n+m+1}} {}_2 F_1\left(-n,-m;-n-m; \frac{p(p-2)}{(p-1)^2} \right),
\]
whenever $\re(p)> 0$ with ${}_2 F_1$ is the hypergeometric function (see \cite[Chapter 15]{dlmf}) 
\begin{align}
{}_2 F_1 (a,b;c; z) = \sum_{k=0}^\infty \frac{(a)_k(b)_k}{(c)_k}\frac{z^k}{k!},
\end{align}
where $(x)_k = \prod_{j=0}^{k-1} (x+j)$. 
Setting $p= 1+\I t$, we get
\[
\E^{-\I tH_0}(n,m) = \frac {1}{1+\I t}\left(\frac{-\I t}{1+\I t}\right)^{n+m} {n+m\choose n} {}_2 F_1\left(-n,-m;-n-m;1+\frac{1}{t^2}\right).
\]
Finally we recall the connection with the Jacobi polynomials \cite[(4.22.1)]{sz}
\[
P_k^{(\inda,\indb)}(z)= {2k+\inda+\indb\choose n} \left(\frac{z-1}{2} \right)^k \,{}_2F_1\left(-k,-k-\inda;-2k-\inda-\indb;\frac{2}{1-z}\right),
\]
which establishes \eqref{eq:explicit}.
\end{proof}

We note some special cases:
\begin{corollary}\label{cor:cases}{\ }
\begin{itemize}
\item[(i)] In the case $n=0$ we have
\[ 
\E^{-\I tH_0}(0,m) =\E^{-\I tH_0}(m,0) = \frac {1}{1+\I t}\left(\frac{t}{\I-t}\right)^m,\quad m\in\N_0.
\]
\item[(ii)] In the case $m=1$ we have
\[
\E^{-\I tH_0}(1,m) =\E^{-\I tH_0}(m,1) = \frac {1}{1+\I t} \left(\frac{t}{\I - t} \right)^{m+1} \frac{t^2-m}{t^2},\quad m\in\N_0.
\]
\item[(iii)]
In the case $n=m$ we have
\[
\E^{-\I tH_0}(n,n) = \frac {1}{1+\I t}\left(\frac{\I+t}{\I-t}\right)^n P_n\left( \frac{1-t^2}{1+t^2}\right), \quad n\in\N_0,
\]
where 
\[
P_n(z) = \frac{1}{2^n\Gamma(n+1)}\frac{d^n}{dz^n} (z^2-1)^{n}
\]
are the Legendre polynomials \cite{sz}.
\end{itemize}
\end{corollary}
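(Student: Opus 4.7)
The plan is to specialize the general kernel formula \eqref{eq:explicit} of Theorem \ref{thm:explicit} to each choice of indices, evaluating the Jacobi polynomial $P_n^{(m-n,0)}$ via the explicit sum \eqref{eq:Jacpol}. The symmetry relations $\E^{-\I tH_0}(n,m)=\E^{-\I tH_0}(m,n)$ present in (i) and (ii) are automatic from self-adjointness of $H_0$ (hence the kernel of $\E^{-\I tH_0}$ is a symmetric function of its indices), so only the case $n\le m$ needs to be considered.

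For (i), taking $n=0$ leaves only the $k=0$ term in \eqref{eq:Jacpol}, so $P_0^{(m,0)}(z)\equiv 1$. Also $\bigl((\I+t)/(\I-t)\bigr)^{0}=1$, and the stated expression is read off directly. For (iii), taking $n=m$ collapses the middle factor $\bigl(t/(\I-t)\bigr)^{m-n}$ to $1$, and one identifies $P_n^{(0,0)}$ with the Legendre polynomial $P_n$ (either from the Rodrigues formula or the standard specialization of the Jacobi polynomials recorded in \cite{sz}); no further simplification is needed.

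Case (ii) is the only one requiring a short computation. With $n=1$, the polynomial $P_1^{(m-1,0)}(z)$ is linear: summing the two terms in \eqref{eq:Jacpol} gives
\begin{equation*}
P_1^{(m-1,0)}(z)=\tfrac{1}{2}\bigl((m+1)z+(m-1)\bigr).
\end{equation*}
Substituting $z=(1-t^2)/(1+t^2)$ and clearing denominators yields $P_1^{(m-1,0)}\!\bigl(\tfrac{1-t^2}{1+t^2}\bigr)=(m-t^2)/(1+t^2)$. Plugging this into \eqref{eq:explicit} and using the identity $(\I+t)(\I-t)=-(1+t^2)$ allows one to absorb the prefactor $(\I+t)/(\I-t)$ together with the $(1+t^2)^{-1}$ into two additional powers of $t/(\I-t)$ and the sign flip $(m-t^2)\mapsto (t^2-m)$, producing the stated form.

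The only potential obstacle is the sign-and-power bookkeeping in the simplification for (ii); cases (i) and (iii) are immediate specializations of Theorem \ref{thm:explicit}.
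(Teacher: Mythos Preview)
Your proof is correct and follows essentially the same route as the paper: both simply specialize Theorem~\ref{thm:explicit} using $P_0^{(m,0)}\equiv 1$, $P_1^{(m-1,0)}(z)=\tfrac12\bigl((m+1)z+(m-1)\bigr)$, and $P_n^{(0,0)}=P_n$. Your write-up is more detailed in carrying out the algebraic simplification for case~(ii) and in justifying the index symmetry via self-adjointness, but the argument is the same.
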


\begin{proof}
Just observe
\[
P_0^{(m,0)} = 1, \quad P_1^{(m-1,0)} = \frac{m-1+(m+1) x}{2},\quad P_n^{(0,0)}(z)=P_n(z).\qedhere
\]
\end{proof}

In all the above cases we have $|\E^{-\I tH_0}(\cdot,\cdot)| \le   t^{-1}$ for all $t> 0$ and $|\E^{-\I tH_0}(\cdot,\cdot)| \sim  t^{-1}$ as $|t|\to \infty$
(recall the well-known estimate $\max_{x\in[-1,1]}|P_n(x)| = 1$; see also \eqref{eq:sz_est}) and our final aim is to establish this estimate for all cases.
In this respect we also remark that using $P_n^{(m-n,0)}(-1)= (-1)^n$ we get
\[
\E^{-\I tH_0}(n,m)= \frac{1}{\I t} +O(t^{-2}),\quad t\to \infty,
\]
but the error is not uniform since $\frac{d}{dz} P_n^{(m-n,0)}(-1)=(-1)^n \frac{n(m-1)}{2}$.

\begin{proof}[Proof of Theorem \ref{thm:decay}]
Using Corollary \ref{cor:cases} (i), we get
\[
\|\E^{-\I tH_0}\|_{\ell_1\to \ell_\infty} \ge \big|\E^{-\I tH_0}(0,0)\big| = \frac{1}{\sqrt{1+t^2}},\quad t\in\R.
\]
The converse inequality  follows from the estimate \cite[formula (20)]{haa}
\be\label{eq:g-ineq}
|g_n^{(\inda,\indb)}(x)| \le \left(\frac{(n+1)(n+\inda+\indb+1)}{(n+\inda+1)(n+\indb+1)}\right)^{1/4},
\ee
where
\[
g_n^{(\inda,\indb)}(x) = \left(\frac{\Gamma(n+1)\Gamma(n+\inda+\indb+1)}{\Gamma(n+\inda+1)\Gamma(n+\indb+1)}\right)^{1/2}
\left(\frac{1-x}{2}\right)^{{\inda}/{2}}\left(\frac{1+x}{2}\right)^{\indb/2} P_n^{(\inda,\indb)}(x).
\]
The inequality \eqref{eq:g-ineq} holds true for all $x\in [-1,1]$ and $\inda$, $\indb\in\N_0$. In our case this reduces to
\[
\big|g_n^{(m-n,0)}(x)\big| \le \left(\frac{(n+1)(m+1)}{(m+1)(n+1)}\right)^{1/4} =1
\]
and the claim follows upon observing
\be\label{eq:fin}
\E^{-\I tH_0}(n,m)
	= \frac {1}{1+\I t}\left(\frac{t+\I}{t-\I}\right)^{\frac{m+n}{2}} g_n^{(m-n,0)}\left( \frac{1-t^2}{1+t^2}\right). \qedhere
\ee
\end{proof}

\begin{proof}[Proof of Theorem \ref{thm:decay_nm}] The proof is based on the following inequality (see \cite[Theorem 1.1]{haa})
\be\label{eq:haagerup}
|(1-x^2)^{1/4} g_n^{(\inda,\indb)}(x)| \le \frac{C}{(2n+\inda+\indb+1)^{1/4}},\quad n\in\N_0,
\ee
which holds with $C=2\sqrt[4]{168}$ for all $x\in [-1,1]$ and $\inda$, $\indb\in \N_0$ (see p.235 in \cite{haa} and also Lemma 4.3 there). 
Now it suffices to note that 
\[
(1-x^2)^{1/4} = \sqrt{\frac{2t}{1+t^2}},\quad x=\frac{1-t^2}{1+t^2},
\]
and then using \eqref{eq:fin}, we arrive at \eqref{eq:decay_nm}.
\end{proof}

\begin{remark}\label{rem:bernstein}
The decay rate $1/4$ in \eqref{eq:haagerup} is optimal as $\inda$ and $\indb$ tend to infinity (see \cite[Remark 4.4]{haa}). However, when $\inda$ and $\indb$ are fixed, for example, if $\inda=0$ and $\indb=0$, then the classical Bernstein inequality \cite[Theorem 7.3.3]{sz} (see also \cite{ant})
\be\label{eq:bernst}
(1-x^2)^{1/4} |P_n(x)| \le \frac{2}{\sqrt{\pi(2n+1)}},\quad n\in\N_0,\ \ x\in [-1,1],
\ee
together with Corollary~\ref{cor:cases} (iii) implies
\[
|\E^{-\I tH_0}(n,n)| \le \frac{1}{\sqrt{\pi t (n+1/2)}},\quad n\in\N_0.
\]
\end{remark}

Finally, we mention another representation of the kernel of $\E^{\I tH_0}$ which might be of independent interest.

\begin{lemma}\label{lem:conv}
Let
\be\label{eq:Fn}
F_n(t) = \frac {2}{1+2\I t}\left(\frac{1 - 2\I t}{1+2\I t}\right)^{n},\quad n\in\N_0.
\ee
Then
\begin{align}
\E^{-\I tH_0}(n,m) =& (-1)^{n+m} \big(F_n * F_m)(t) \nn\\
&= \frac{(-1)^{n+m}}{2\pi}  \int_{\R} F_n(s) F_m(t-s) ds,\quad t>0.\label{eq:explicit_b}
\end{align}
\end{lemma}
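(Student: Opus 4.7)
The plan is to recast the integral representation from Lemma~\ref{lem:exponent} as a product under the Fourier transform, then invoke the convolution theorem. Concretely, I would set $f_n(\lambda) := \id_{[0,\infty)}(\lambda)\,\E^{-\lambda/2}L_n(\lambda)$; since $L_n$ is a polynomial and the exponential decays, $f_n\in L^1(\R)\cap L^2(\R)$. On $[0,\infty)$ we have $L_n(\lambda)L_m(\lambda)\E^{-\lambda} = f_n(\lambda)f_m(\lambda)$, so extending the domain of integration to $\R$ turns Lemma~\ref{lem:exponent} into
\[
\E^{-\I tH_0}(n,m) = (-1)^{n+m}\,\widehat{f_nf_m}(t), \qquad \widehat{f}(t):=\int_\R \E^{-\I t\lambda}f(\lambda)\,d\lambda.
\]

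The heart of the argument is the identification of $\widehat{f_n}$. For this I would use the classical Laplace transform of the Laguerre polynomial,
\[
\int_0^\infty \E^{-p\lambda}L_n(\lambda)\,d\lambda = \frac{(p-1)^n}{p^{n+1}},\qquad \re p>0,
\]
which is immediate from the series expansion \eqref{eq:laguerpol} by term-by-term integration and the identity $(p-1)^n/p^{n+1}=p^{-1}(1-1/p)^n$. Setting $p=\tfrac12+\I t$ and using $p^{-1}=2/(1+2\I t)$ together with $(p-1)/p = -(1-2\I t)/(1+2\I t)$ yields $\widehat{f_n}(t) = (-1)^n F_n(t)$. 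Equivalently, $F_n$ itself is the Fourier transform of $\id_{[0,\infty)}\E^{-\lambda/2}P_n(\lambda)$, where $P_n=(-1)^nL_n$ are the orthonormal polynomials from the proof of Theorem~\ref{thm:spH_0}.

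Finally, I would apply the convolution theorem $\widehat{fg} = (2\pi)^{-1}\widehat{f}*\widehat{g}$ (legitimate since $f_n,f_m\in L^1\cap L^2$), which gives
\[
\widehat{f_nf_m}(t) = \frac{(-1)^{n+m}}{2\pi}\int_\R F_n(s)F_m(t-s)\,ds,
\]
and combining with the $(-1)^{n+m}$ prefactor yields \eqref{eq:explicit_b}. The analytic content of the proof is entirely contained in the Laplace transform of $L_n$, which is standard; the step that really requires care is the bookkeeping of signs and normalizations, namely the $(-1)^n$ coming from the change of basis $L_n\leftrightarrow P_n$ and from the rewriting $(p-1)/p=-(1-2\I t)/(1+2\I t)$, the factor $1/(2\pi)$ from the convolution theorem, and the extension of integration from the half-line to $\R$, so that the final identity aligns precisely with the stated form~\eqref{eq:explicit_b}.
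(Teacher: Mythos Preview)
Your approach is exactly the paper's: split $\E^{-\lambda}=\E^{-\lambda/2}\cdot\E^{-\lambda/2}$, recognise $F_n$ as the Fourier transform of $\id_{[0,\infty)}\E^{-\lambda/2}L_n$ (via the Laplace transform of $L_n$), and apply the convolution theorem to the representation of Lemma~\ref{lem:exponent}.

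One point on the sign bookkeeping you yourself flag as delicate. Your computation $\widehat{f_n}=(-1)^nF_n$ is correct (the paper's proof actually asserts $\widehat{f_n}=F_n$, citing a table entry; check $n=1$ to see your sign is the right one). But then your displayed identity $\widehat{f_nf_m}=\frac{(-1)^{n+m}}{2\pi}\int F_nF_m$, combined with the prefactor $(-1)^{n+m}$ from Lemma~\ref{lem:exponent}, makes the signs \emph{cancel}, yielding
\[
\E^{-\I tH_0}(n,m)=\frac{1}{2\pi}\int_\R F_n(s)F_m(t-s)\,ds,
\]
not~\eqref{eq:explicit_b} as stated. A direct check at $(n,m)=(0,1)$ confirms this: Corollary~\ref{cor:cases}(i) gives $\E^{-\I tH_0}(0,1)=-\I t/(1+\I t)^2$, and a residue computation of $\frac{1}{2\pi}\int F_0(s)F_1(t-s)\,ds$ gives the same value, with no extra minus sign. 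So your argument is sound and coincides with the paper's method, but your last sentence should not claim agreement with~\eqref{eq:explicit_b} as printed; rather, your more careful tracking of signs corrects a misprint there.
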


\begin{proof}
Using \cite[(4.11.31)]{erd} we compute
\[
\int_0^\infty \E^{-\I t\lambda} L_n(\lambda)\E^{-\lambda/2} d\lambda = \frac {2}{1+2\I t}\left(\frac{1 - 2\I t}{1+2\I t}\right)^{n} = F_n(t),\quad n\in\N_0.
\]
Noting that the Fourier transform of a product of two functions is equal to the convolution of their Fourier transforms and using \eqref{PP}, we end up with \eqref{eq:explicit_b}.  
\end{proof}

\begin{remark}
Noting that
\[
F_{n+1}(t) = \frac{2}{1+2\I t}F_n(t) - F_n(t),\quad (F_0 * F_n)(t) = \frac{1}{1+\I t} \left( \frac{\I t}{1+\I t}\right)^n,\quad n\in\N_0,
\]
and then estimating the convolution, one can show by using induction that 
\[
|\E^{-\I tH_0}(n,m)| \le \frac{1+|m-n|}{\sqrt{1+t^2}},\quad t\in\R.
\]
\end{remark}

{\bf Acknowledgments.}
We are indebted to August Krueger for drawing our attention to this problem and Avy Soffer for useful discussions. We thank the anonymous  referees for their careful reading of our manuscript and critical comments.


\end{document}